\documentclass[10pt, conference, letterpaper]{IEEEtran}
\IEEEoverridecommandlockouts
\usepackage{bbm}
\usepackage{listings}
\usepackage{cite}
\usepackage{url}
\usepackage{xcolor}
\usepackage{graphicx}
\usepackage{epstopdf}
\usepackage[cmex10]{amsmath}
\usepackage{cases}
\usepackage{amsfonts}
\usepackage{amssymb}
\usepackage{amsthm}
\usepackage{mathrsfs}
\usepackage{verbatim}
\usepackage{algorithm}
\usepackage{algpseudocode}
\usepackage{setspace}
\usepackage{bm}%
\usepackage{stfloats}
\usepackage{enumerate}

\theoremstyle{plain}
\usepackage{multirow}
\newtheorem{theorem}{Theorem}
\newtheorem{remark}{Remark}

\newtheorem{definition}{Definition}

\usepackage{array}

\usepackage{tabularx}
\usepackage{booktabs}  
\usepackage{multirow}  
\usepackage[letterpaper, left=0.625in, right=0.625in, top=0.75in, bottom=0.98in]{geometry}

\usepackage[caption=false,font=normalsize,labelfont=sf,textfont=sf]{subfig}

\usepackage{tcolorbox}
\usepackage{pifont}

\usepackage[caption=false,font=normalsize,labelfont=sf,textfont=sf]{subfig}

\begin{document}

\title{Semantics-Aware Communication: \\A Differentiated Allocation Perspective}

\author{Fangming~Zhao\IEEEauthorrefmark{1}, Nikolaos Pappas\IEEEauthorrefmark{2}, and Howard H.~Yang\IEEEauthorrefmark{1}\\
    \IEEEauthorrefmark{1} ZJU-UIUC Institute of Zhejiang University, Haining, China\\
    \IEEEauthorrefmark{2} Department of Computer and Information Science, Linköping University, Linköping, Sweden
    \thanks{This paper was supported in part by the National Natural Science Foundation of China under Grant 62671550, in part by the Yunnan Science and Technology Program under Grant 202602AD080017, and in part by the National Key R\&D Program of China under Grant 2024YFE0200700. The work of Nikolaos Pappas was supported by ELLIIT, and the European Union (6G-LEADER, 101192080, and MAGIC-6G, 101292933). (\textit{Corresponding Author: Howard H. Yang.})}}

\maketitle
\thispagestyle{empty}

\begin{abstract}
We study the joint optimization of timeliness and reliability in semantics-aware Wireless Networked Control Systems (WNCS) under computational resource constraints. The sampled data are categorized into regular and critical tasks based on the semantic states, facilitating differentiated resource allocation. Task-aware Age of Actuated Information (AoAI) and Cost of Missing Actuation (CoMA) are used to characterize the task-level freshness and the reliability penalty of missed actuations, respectively. By modeling the controller as a discrete-time multi-rate Geo/D/C/C queue, we evaluate the performance of regular and critical tasks, the latter imposing higher computational demands. Results confirm that differentiated resource allocation across heterogeneous tasks effectively improves the actuation reliability of critical tasks in severely constrained environments.
\end{abstract}

\begin{IEEEkeywords}
Semantics-aware, multi-service loss system, Age of Actuated Information (AoAI), CoMA.
\end{IEEEkeywords}

\section{Introduction}
The Age of Information (AoI) serves as a fundamental metric to quantify information freshness in time-critical networks \cite{INFOCOM2012:AoIstart, JSACAoIsurvey}. By capturing the time elapsed since the generation of the most recently received update, AoI facilitates the design of communication systems with timeliness constraints \cite {ZhaoTWC2024}. However, AoI remains a transmission-centric metric that falls short in WNCS. In such systems, the intrinsic utility of data depends heavily on actuation rather than mere packet delivery. Recognizing that system efficacy is governed by task semantics, it is imperative to shift the focus toward semantics-aware communication \cite{luoNikosTCOMInvite}. Under this paradigm, transmission timing and resource allocation are explicitly designed to meet the underlying execution objectives of the receiver.

Within this transmission-to-execution framework, the AoAI metric characterizes the temporal mismatch between the current system state and the most recently executed control action \cite{Nikos:AoA}. By shifting the focus from information reception to action execution, AoAI provides a direct link between communication timeliness and control effectiveness. Prior studies on AoAI are largely limited to single-stream settings, overlooking the heterogeneity of different tasks. In real-time control, dropping a critical packet can severely degrade system stability, necessitating differentiated treatment in system design. Therefore, we need to evaluate the AoAI for each different class independently, which is referred to as task-aware AoAI.

Furthermore, for rare yet highly critical events, AoAI alone is insufficient to characterize their true impact. For infrequent tasks, AoAI alone cannot distinguish between rare task generation and unreliable execution after generation. Once an actuation demand is generated, the consequence of failing to execute it should be determined by its task-level significance rather than be directly weighted by generation probability. We therefore define CoMA as the conditional expected miss cost given task generation. Note that the generation probabilities of the task classes still affect the conditional miss probabilities indirectly through the offered traffic loads and resource contention. 
By combining the AoAI of each task and CoMA for execution reliability, we can capture the multidimensional requirements of a WNCS via a bi-objective optimization framework\cite {li2026MPCareto}.

Despite recent advancements in semantics-aware system design\cite{luoNikosTCOMInvite}, most existing works primarily emphasize communication resource allocation while overlooking the joint effects of communication and computation. In systems constrained by limited processing capacity, the implicit interaction between execution reliability and competition for computing resources remains underexplored. To solve these issues, we develop a framework that jointly models: $(i)$ task-oriented sample transmission over imperfect uplinks; $(ii)$ the dynamic allocation and release of computational resources at the controller; $(iii)$ heterogeneous resource requirements of different task types. The main contributions of this work are summarized as follows:
\begin{itemize}
    \item We model a semantics-aware WNCS constrained by limited computational resources. Specifically, a multi-rate Geo/D/C/C queueing system is developed to capture the computational resource dynamics at the controller. We compare its blocking probability against both the \mbox{ Geo/Geo/C/C} model and the continuous-time Erlang formula. The results reveal that the blocking probability in discrete-time loss queueing systems is not strictly insensitive to the service distribution. Nevertheless, a stochastic service model can still be employed as an approximation for deterministic service, thereby mitigating the curse of dimensionality caused by multi-rate deterministic models.
    \item We analyze two semantics-aware metrics, namely task-aware AoAI and CoMA, which characterize the timeliness and reliability of task execution, respectively. Our analysis reveals strong competition for limited computational resources among heterogeneous task flows, whereby the execution of one task class can affect the performance of the others. Numerical results demonstrate that differentiated resource allocation can improve resource efficiency and enhance the execution reliability of critical tasks in severely resource-constrained WNCS.
\end{itemize}

\section{System model}
We consider a WNCS that integrates communication and computation. As illustrated in Fig. \ref{SystemModelBiChannel}, the system comprises multiple modules: a sensor to generate sampled data, an uplink channel for data transmission, an edge controller to process tasks and make decisions, and an actuator to execute the task commands. In the following, we introduce the mathematical modeling of each module, which allows for formulating the performance metrics: the task-aware AoAI and the CoMA.
\begin{figure}
    \centering
    \includegraphics[height=3.6cm,width=8cm]{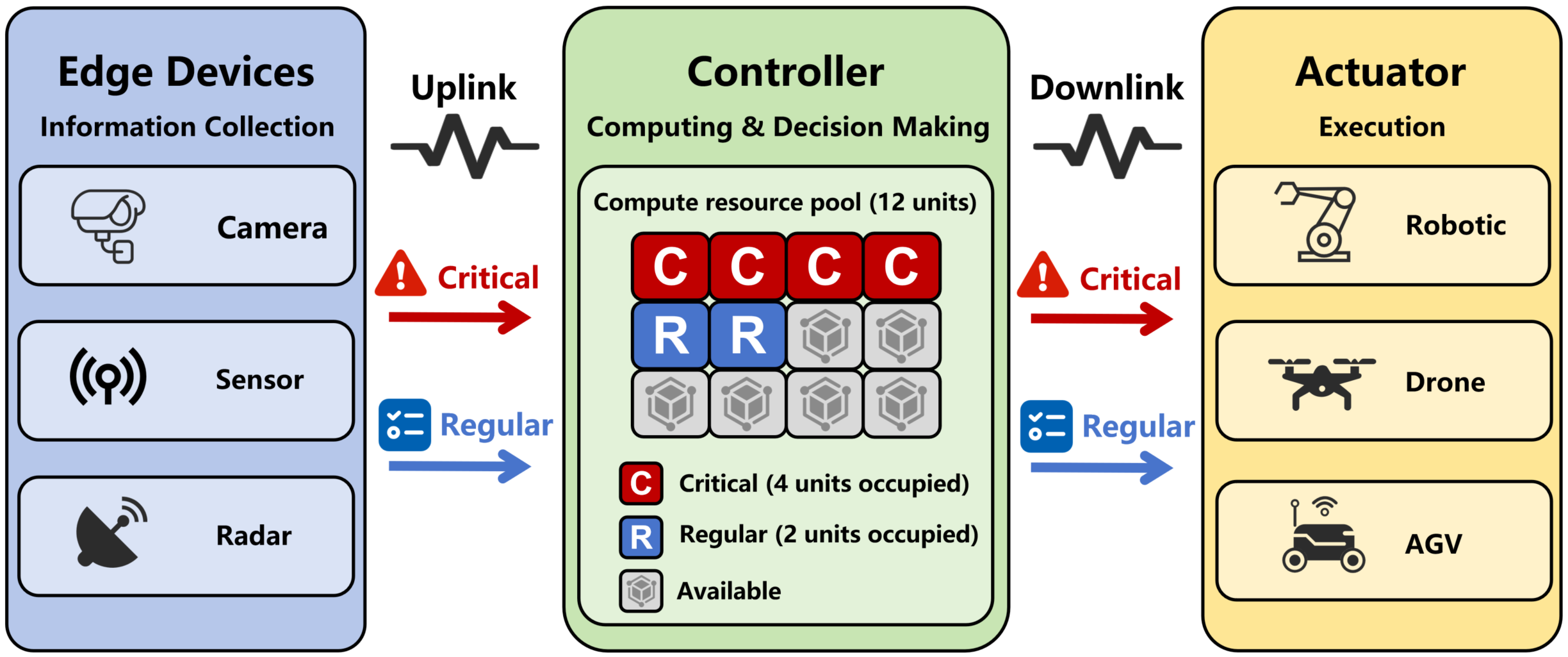}
    \caption{Semantics-aware wireless networked control systems architecture.}
    \label{SystemModelBiChannel}
    \vspace{-0.5cm}
\end{figure}

\subsection{Wireless Networked Control System Modeling}

\subsubsection{\textbf{Semantics-Aware Information Generation}} We consider a discrete-time framework where the sensor generates sampling data to support distinct perception and control functionalities. To optimize the utilization of wireless resources while guaranteeing system stability, the generated data is classified into two categories: regular information and critical information. We aim to implement differentiated resource allocation based on the semantics of the information.

Specifically, the semantic value of the data is dynamically evaluated based on the local physical state. A regular packet is generated when the discrepancy exceeds a predefined regular threshold but remains within a safe margin. In contrast, a critical packet is triggered when the discrepancy violates a critical safety limit. Let $g_1$ and $g_2$ denote the overall probabilities of generating regular information and critical information in a given time slot, respectively, where $g_1+g_2\leq 1$. The remaining probability $1-g_1-g_2$ accounts for the idle time slots where the physical state is well-predicted and no data is generated. To streamline the presentation, we refer to the control commands associated with the regular and critical information streams as Task 1 and Task 2, respectively, indexed by $i\in\mathcal{T}\triangleq\{1,2\}$.

\subsubsection{\textbf{Uplink Communication Model}} Once a packet is generated, the sensor attempts to transmit it to the controller over the uplink wireless channel. Given a generated class $i$ packet, the sensor accesses the uplink with probability $\eta_i$. The uplink channel experiences small-scale fading, where the fading envelope $|h_u|$ follows a Nakagami-$m$ distribution with a shape parameter $m$ and a normalized average power $\mathbb{E}[|h_u|^2] = 1$. Additionally, it suffers from large-scale path loss modeled as $d_u^{-\alpha_u}$, where $d_u$ is the distance from the sensor to the controller and $\alpha_u$ is the path-loss exponent.
Consequently, the received signal-to-noise ratio (SNR) $\gamma_{u,i}$ at the controller is $\gamma_{u,i} = \frac{P_{T,i}|h_u|^2 d_u^{-\alpha_u}}{\sigma^2}$, where $P_{T,i}$ is the transmit power of the sensor, and $\sigma^2$ is the noise power at the controller. If $\gamma_{u,i}$ falls below a decoding threshold $\bar{\gamma}$, the uplink packet is lost. The uplink transmission success probability for task $i$, denoted by $p_{u,i}$, is given by $p_{u,i}=\mathbb{P}(\gamma_{u,i}\geq\bar{\gamma})$. 

However, combating channel impairments to guarantee the required $p_{u,i}$ requires sufficient transmit power $P_{T,i}$, which inherently drains the limited battery life of the edge device. To ensure the practical feasibility of the proposed strategy, we impose a long-term average transmit-power budget $\bar{P}$ at the edge device. Accordingly, the differentiated transmission parameters for heterogeneous tasks must satisfy $\sum_{i\in\mathcal{T}} g_i\eta_i P_{T,i}\leq \bar{P}$.

\subsubsection{\textbf{Controller with Limited Computational Resource}}
The controller acts as a computing node with limited computational capacity, comprising $C$ parallel and independent computation units. Upon successfully decoding an uplink packet, the receiver performs a lightweight header parsing to identify its semantic type (Task 1 or Task 2). Based on this identification, the controller allocates computation resources following a semantics-aware resource allocation policy:\footnote{We differentiate resource allocation as follows: regular tasks use semantic priors for efficient recovery, while critical tasks require high-fidelity reconstruction, thereby requiring greater computational effort.}
\begin{itemize}
    \item \textbf{Task 1}: requiring $N_1=1$ computation unit. The deterministic service time is $D_{C,1}$ slots.
    \item \textbf{Task 2}: requiring $N_2=N$ computation units ($N>1$) for complex calculations. The assigned $N$ units process the task in a synchronous parallel manner, with a deterministic service time of $D_{C,2}$ slots.
\end{itemize}

All computation units are non-preemptive. Once allocated to a task, they remain occupied until the computation is completed. We adopt an admission-before-release convention at each slot boundary. Admission is determined using the pre-release resource occupancy. If computational resources are insufficient upon task arrival, the task is discarded without entering the buffer. Therefore, the dynamics of the computational resource pool can be modeled as a multi-rate Geo/D/C/C queueing system.

\subsubsection{\textbf{Downlink Communication Model}}
The traffic link of actuation commands from the controller to the actuator is assumed to be error-free, and only considers constant delay $D_{T,i}$. This assumption is well justified by the highly asymmetric nature of WNCS links: the controller typically operates with an ample power budget and a continuous energy supply, enabling it to employ advanced channel-adaptive schemes to guarantee a near-100\% successful delivery rate. Consequently, the primary system bottlenecks are dominated by the uplink transmission and the computation queueing.

\subsubsection{\textbf{Actuation Model}}
The actuation system is deployed at the end device. Successfully processed task commands are immediately executed. If a regular task is dropped due to uplink transmission failure or computational resource limits, the actuator maintains the last valid command.\footnote{Our differentiated allocation strategy exploits the disparate error tolerances inherent in semantics-aware control systems. While packet loss in critical tasks causes catastrophic failure, regular tasks possess semantic redundancy. Leveraging Zero-Order Hold mechanisms\cite{luoNikosTCOMInvite} at the actuator, the system exhibits degradation against the dropping of regular tasks. This property justifies sacrificing regular tasks to guarantee the reliability of critical tasks.} While ensuring continuous physical input, holding stale commands degrades system stability. This directly motivates our use of the AoAI and the CoMA to quantify the penalty of state deviations.

\subsection{Performance Metrics}
In time-critical systems, the AoI metric is commonly adopted as a key performance metric to quantify information freshness. According to the original definition, the AoI metric does not differentiate between the nature of data flows and focuses solely on the transmission process from source to destination. As such, it does not adequately capture the heterogeneity and computation execution processes in task-oriented networks.

In this work, we further extend two performance metrics: the AoAI and the CoMA \cite{Nikos:AoA}. Given the presence of different types of tasks, we evaluate these metrics separately for each task to capture their distinct characteristics and their impact on performance. The definitions are given as follows.
\begin{definition}
\textit{(Task-aware Age of Actuated Information)} Consider a system that involves $n$ types of task commands, task $i$, $i\in\{1,2,...,n\}$. Let ${A_i(t_{i,j})}$, ${j \ge 1}$ be the sequence of generation times of information packets belonging to task type $i$ that were successfully executed by the actuator. The index of the most recently executed task $i$ update by time $t$ is
\begin{equation}
    J_i(t) \triangleq \max \{j:t_{i,j}<t\}. 
\end{equation}
Accordingly, the generation time of the most recently executed task $i$ update is $A_i(t)\triangleq A_i(t_{i,J_i(t)})$. Therefore, the AoAI for task type $i$ at time $t$ is defined as  
\begin{equation}
    \Delta_i(t)\triangleq t-A_i(t).
\end{equation}
The AoAI of task class $i$ is reset exclusively upon the successful execution of a task $i$ commands, with the reset value equal to the corresponding generation-to-actuation delay.
\end{definition}

Following Definition 1, the time-average function of the task-aware AoAI metric (i.e., time-average AoAI) is given by
   \begin{equation}
\bar{\Delta}_i\triangleq\limsup_{T\to \infty}\frac{1}{T}\mathbb{E}\left[\sum_{t=1}^{T}\Delta_i(t)\right].
   \end{equation}

The time-average AoAI characterizes the long-term timeliness of operator actuation. Moreover, to evaluate the reliability of transmission over imperfect channels and the performance degradation caused by insufficient computational resources, we extend the general penalty-based performance metric, the Cost of Missing Actuation, to the multi-task case to denote the cost of discarding a task command. Specifically, a penalty of $\omega_i$ is incurred for each unsuccessfully executed Task $i\in\mathcal{T}$. 
\begin{definition}
  Cost of Missing Actuation (CoMA): The aggregate long-term expected loss cost per generated task of each class is given by
  \begin{equation}
      \mathcal{C}_{\mathrm{MA}}
\triangleq
\sum_{i\in\mathcal{T}}
\limsup_{T\rightarrow\infty}
\frac{
\mathbb{E}\left[
\sum_{t=1}^{T}L_i(t)\mathbb{I}_i(t)
\right]
}{
\mathbb{E}\left[
\sum_{t=1}^{T}G_i(t)
\right]
},
\end{equation}
  where $L_i(t)$ denotes the cost associated with the failure of a class $i$ task at time $t$, $\mathbb{I}_i(t)\in\{0, 1\}$ is an indicator function, and $\mathbb{I}_i(t)=1$ indicates the class $i$ task loss due to ($i$) no transmission is attempted; ($ii$) uplink decoding fails; or ($iii$) the controller lacks sufficient computing resources, $G_i(t)\in \{0,1\}$ indicates whether a class-$i$ task is generated at time $t$.
\end{definition}

\section{Performance analysis}
By leveraging a discrete-time graphical approach \cite{JSAC2025:AoIEH}, the time-average AoAI for task $i$, denoted by $\Delta_i$, can be formulated as
\begin{equation}\label{def:AoA}
\bar{\Delta}_i=\frac{\mathbb{E}[X_{i}^2]+\mathbb{E}[X_{i}]}{2\mathbb{E}[X_{i}]}+\mathbb{E}[D_{C,i}]+\mathbb{E}[D_{T,i}],
\end{equation}
where random variable $X_i$ represents the time interval between two consecutive executions of task $i$. Let $\Gamma$ denote the number of idle computing units observed by an arriving task at its admission epoch. By using the squared coefficient of variation $c^2_{X_i}\triangleq \mathrm{Var}(X_i)/\mathbb{E}[X_i]^2$, the analytical expressions for these metrics are established in Theorem 1.
\begin{theorem}
 Time-average AoAI of task $i$ can be expressed as
  \begin{align}\label{eq:AoA:main}
      \bar{\Delta}_i 
      &= \frac{1+c^2_{X_i}}{2g_i \eta_i  p_{u,i} \mathbb{P}(\Gamma\geq N_i)}+\frac{1}{2}+D_{C,i} + D_{T,i},\\
      &\approx \frac{1}{g_i \eta_i  p_{u,i} \mathbb{P}(\Gamma\geq N_i)}+D_{C,i} + D_{T,i},
   \end{align}
where $c^2_{X_i}$ can be evaluated via a first-passage-time analysis. The CoMA can be given by
\begin{align}\label{eq:CoMA:main}
    \mathcal{C}_{\mathrm{MA}}\!=\!\sum_{\small i \in \{1,2\}} \omega_i (1-\eta_i p_{u,i}\mathbb{P}(\Gamma\geq N_i)),
\end{align}
where $p_{u,i} = \frac{\int_{m\psi_i}^{\infty} t^{m-1} e^{-t} dt}{\int_{0}^{\infty} t^{m-1} e^{-t} dt}$, and $\psi_i={\frac{\bar{\gamma} \sigma^2 d_u^{\alpha_u}}{P_{T,i}}}$.
\end{theorem}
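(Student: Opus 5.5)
We start from the graphical AoAI formula \eqref{def:AoA} and treat it as given, so the work is to identify $\mathbb{E}[X_i]$ and to rewrite $\mathbb{E}[X_i^2]$ through $c^2_{X_i}$. Write $\mathbb{E}[X_i^2]=(1+c^2_{X_i})\mathbb{E}[X_i]^2$. Then
\begin{equation*}
\frac{\mathbb{E}[X_i^2]+\mathbb{E}[X_i]}{2\mathbb{E}[X_i]}=\frac{(1+c^2_{X_i})\mathbb{E}[X_i]}{2}+\frac{1}{2}.
\end{equation*}
Since $D_{C,i}$ and $D_{T,i}$ are deterministic, their expectations are simply $D_{C,i}$ and $D_{T,i}$. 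The first line of \eqref{eq:AoA:main} then follows once we show $\mathbb{E}[X_i]=1/(g_i\eta_i p_{u,i}\mathbb{P}(\Gamma\ge N_i))$.

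\textbf{Step 1: the per-slot success probability.} In each slot, a class-$i$ task is executed iff four things happen: it is generated (probability $g_i$), it is transmitted ($\eta_i$), it is decoded ($p_{u,i}$), and it finds at least $N_i$ idle units at admission. Generation, access and fading are independent of the occupancy process. By a discrete-time BASTA/PASTA-type argument, Bernoulli arrivals see the stationary occupancy of the multi-rate Geo/D/C/C chain, so the admission probability is $\mathbb{P}(\Gamma\ge N_i)$. The long-run fraction of slots with a class-$i$ execution is therefore $\lambda_i=g_i\eta_i p_{u,i}\mathbb{P}(\Gamma\ge N_i)$. Executions form a stationary (Markov-modulated) point process, so renewal-reward / Palm inversion gives $\mathbb{E}[X_i]=1/\lambda_i$.

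\textbf{Step 2: the variance term.} The $X_i$ are not geometric, because admission of successive tasks is correlated through the occupancy state. We therefore obtain $c^2_{X_i}$ by a first-passage-time analysis. We take the embedded chain of resource occupancy, including residual deterministic service times, at the instant just after a class-$i$ admission. We then compute the first two moments of the hitting time to the next class-$i$ admission via linear equations for the conditional moment vectors. This is the main obstacle: the state space grows combinatorially with $C$, $N$ and $D_{C,i}$, so we only claim the moments are computable, not that they have closed form.

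\textbf{Step 3: the approximation.} The second line of \eqref{eq:AoA:main} comes from setting $c^2_{X_i}\approx 1$, the geometric-like regime where $\lambda_i$ is small. With $\mathbb{E}[X_i]\gg 1$, the term $\tfrac12$ is then negligible relative to $\mathbb{E}[X_i]$ and is dropped, which yields $1/\lambda_i$.

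\textbf{CoMA.} For each class, the ratio in Definition 2 is the long-run fraction of generated class-$i$ tasks that are lost, weighted by $L_i=\omega_i$. By the same independence and arrival-sees-stationary-state argument, a generated task succeeds with probability $\eta_i p_{u,i}\mathbb{P}(\Gamma\ge N_i)$. Ergodicity of the finite chain lets us replace the limsup of the ratio of expectations by this probability, which gives \eqref{eq:CoMA:main}.

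\textbf{Uplink success probability.} Finally, $|h_u|^2\sim\mathrm{Gamma}(m,1/m)$, so
\begin{equation*}
p_{u,i}=\mathbb{P}\bigl(|h_u|^2\ge \bar\gamma\sigma^2 d_u^{\alpha_u}/P_{T,i}\bigr)=\Gamma(m,m\psi_i)/\Gamma(m),
\end{equation*}
which is the stated ratio of integrals.
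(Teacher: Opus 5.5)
Your proposal is correct and has the same skeleton as the paper's proof. With $r_i=g_i\eta_i p_{u,i}\mathbb{P}(\Gamma\ge N_i)$, you substitute $\mathbb{E}[X_i]=1/r_i$ and $\mathbb{E}[X_i^2]=(1+c^2_{X_i})\mathbb{E}[X_i]^2$ into \eqref{def:AoA}. You write the CoMA as one minus the joint probability of access, decoding and admission, weighted by $\omega_i$, and you obtain $p_{u,i}$ from the Gamma law of $|h_u|^2$. Your BASTA and Palm-inversion justification of $\mathbb{E}[X_i]=1/r_i$ and your sketch of the embedded first-passage computation for $c^2_{X_i}$ go beyond the paper, which simply asserts these.

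The one substantive difference is the approximation step. The paper models successful executions as a Bernoulli process with parameter $r_i$, so $X_i$ is geometric on $\{1,2,\dots\}$ and $c^2_{X_i}\approx 1-r_i$. Then $\frac{1+(1-r_i)}{2r_i}+\frac12=\frac{1}{r_i}$ exactly, so the second line holds identically under that model for any $r_i$, and nothing has to be discarded. You instead set $c^2_{X_i}\approx 1$, which is the exponential (continuous-time) value rather than the discrete geometric one, and then drop the resulting $\tfrac12$ by assuming $\mathbb{E}[X_i]\gg 1$. The two errors cancel exactly: the surplus $r_i$ in your SCV contributes precisely the $\tfrac12$ you discard, so you reach the same formula. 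However, your justification adds a light-load requirement that the paper's argument does not need. That requirement is also doubtful for Task 1 at the paper's operating points ($g_1=0.8$). Using $c^2_{X_i}\approx 1-r_i$ is the cleaner route.
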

\begin{IEEEproof}
Let $r_i=g_i \eta_i  p_{u,i} \mathbb{P}(\Gamma\geq N_i)$. 
Then, $\mathbb{E}[X_{i}]=1/r_i$ and $\mathbb{E}[X^2_{i}]=(1+c^2_{X_i})\mathbb{E}[X_{i}]^2$. Substituting these moments and the constant computation and transmission delays into \eqref{def:AoA} yields \eqref{eq:AoA:main}.
The successful execution process of class-$i$ tasks is approximated as a Bernoulli process yielding $ c^2_{X_i}\approx1-r_i$.  The CoMA is formulated by taking one minus the joint probability of a transmission attempt, successful uplink decoding, and sufficient computation, scaled by task penalties $L_i(t)\!=\!\omega_i$. The transmission success probability $p_{u,i} = \mathbb{P}(\gamma_{u,i} \geq \bar{\gamma})$ is reformulated as $\mathbb{P}(|h_u|^2 \geq \psi_i)$, which is then solved via a variable substitution of the Gamma distribution.
\end{IEEEproof}
\begin{remark}
   As a baseline for comparison, the time-average AoI from the sensor to the controller can be expressed as
    \begin{equation}\label{eq:AoI}
   \bar{\Delta}\!=\!\frac{1}{g_1\eta_1 p_{u,1}+g_2\eta_2p_{u,2}}.   
\end{equation}
\end{remark}
From \eqref{eq:AoA:main}–\eqref{eq:CoMA:main}, we observe that the performance of different task flows is coupled through the probability of sufficient computational resources. Specifically, the uplink access probability of task 1 $\eta_1$ influences computational resource occupancy, which in turn affects computational resource availability and timeliness for task 2. Therefore, in resource-constrained regimes, the differentiated treatment of heterogeneous semantic streams is crucial to maximizing the overall system efficacy.

Furthermore, calculating these metrics requires deriving the probability that the controller possesses sufficient computational units upon the arrival of Task $i$, $\mathbb{P}(\Gamma\geq N_i)$. We then model the computation resource allocation and release.

\subsection{Computation Resource Dynamics in Deterministic Service}
The allocation and release of computation units give rise to system dynamics. When the computation service time is deterministic, the memoryless property of task departure is strictly broken. To capture the remaining processing time of each active task, we model the system evolution using a $(D_{C,1}+D_{C,2})$-dimensional
discrete-time Markov chain (DTMC). 

Let $\mathbf{v}_i = [v_{i,1}, v_{i,2}, \dots, v_{i,D_{C,i}}]^T \in \{0, 1\}^{D_{C,i}}$ denote the execution pipeline for Task $i \in \{1, 2\}$, where $D_{C,i}$ is the deterministic required service time (in slots). The binary element $v_{i,k} = 1$ indicates that there is currently a Task $i$ instance with exactly $k$ slots of remaining execution time, and $v_{i,k} = 0$ otherwise. The system state is then defined as a composite vector $\boldsymbol{s} = (\mathbf{v}_1, \mathbf{v}_2)$. The number of currently active tasks of type $i$, denoted by $n_i(\mathbf{v}_i)$, is the $\ell_1$-norm of the vector, i.e., $n_i(\mathbf{v}_i) = \|\mathbf{v}_i\|_1 = \sum_{k=1}^{D_{C,i}} v_{i,k}$. Consequently, the valid state space under the capacity constraint is given by: $\boldsymbol{\Omega}_{\text{Det}} \!=\!\left\{ (\mathbf{v}_1, \mathbf{v}_2)\!\in\!\{0,1\}^{D_{C,1}}\!\times\!\{0,1\}^{D_{C,2}} \;\middle|\!\!\; \|\mathbf{v}_1\|_1 \!+\! N \|\mathbf{v}_2\|_1 \!\leq\! C \right\}.$

An analytically tractable upper bound on the number of feasible states $|\boldsymbol{\Omega}_{\text{Det}}|$ can be derived as:
\begin{equation}\label{eq:D:num}
 |\boldsymbol{\Omega}_{\text{Det}}| \!\!\leq\!\! \sum_{n_2=0}^{\min\left(D_{C,2}, \lfloor C/N \rfloor\right)} \sum_{n_1=0}^{\min\left(D_{C,1}, C - N n_2\right)}\!\!\!\binom{D_{C,1}}{n_1} \!\binom{D_{C,2}}{n_2}\!, 
\end{equation}
where the binomial coefficient $\binom{D_{C,i}}{n_i}$ represents the number of possible execution configurations that contain exactly $n_i$ active tasks. Although the bound of capacity-feasible configurations may include unreachable states, its cardinality provides an upper bound that reflects the combinatorial growth of the deterministic-service state space.

The state transition is driven by the deterministic aging of existing tasks and the admission of newly arrived tasks. Let $a_i(t) \in \{0,1\}$ be the admission indicator for Task $i$ at time slot $t$. As time progresses from slot $t$ to $t+1$, all active tasks advance one step towards completion. A task with $v_{i,1}(t) = 1$ completes its execution and departs the system, releasing its resources. Simultaneously, a newly admitted task enters the pipeline with a full remaining time of $D_{C,i}$. Mathematically, this physical evolution is modeled as a deterministic shift operation $\mathcal{F}(\mathbf{v}_i, a_i)$:
\begin{equation}
v_{i,k}(t+1) = \begin{cases} v_{i, k+1}(t), & \text{for } 1 \leq k \leq D_{C,i} - 1, \\ a_i(t), & \text{for } k = D_{C,i}. \end{cases}
\end{equation}
The admission probabilities $P_{\mathcal{A}}(a_1, a_2 \mid \!\boldsymbol{s})$ follow the same logic as the memoryless case, heavily depending on the current available capacity $C - \|\mathbf{v}_1\|_1 - N \|\mathbf{v}_2\|_1$. Let $\mathcal{A} = \{(0,0), (1,0), (0,1)\}$ denote the feasible admission action space. The state transition probability $P_{\boldsymbol{s}}^{\boldsymbol{s}'}$ from current state $\boldsymbol{s} = (\mathbf{v}_1, \mathbf{v}_2)$ to the next state $\boldsymbol{s}' = (\mathbf{v}'_1, \mathbf{v}'_2)$ is concisely expressed as:
\begin{align}\notag
\sum_{(a_1, a_2) \in \mathcal{A}}\!\!\!P_{\mathcal{A}}(a_1, a_2 \mid \boldsymbol{s}) \mathbb{I}\Big(\mathbf{v}'_1 \!=\! \mathcal{F}(\mathbf{v}_1, a_1)\Big)\mathbb{I}\Big(\mathbf{v}'_2 \!=\! \mathcal{F}(\mathbf{v}_2, a_2)\Big),
\end{align}
where the indicator function $\mathbb{I}(\cdot)$ ensures that the transition probability is strictly zero unless the target state exactly matches the deterministic shift execution pipeline, and:
\begin{align*}
   & P_{\mathcal{A}}(a_1, a_2 \mid \boldsymbol{s}) = \notag\\
   & \begin{cases}
      \!  g_1\eta_1 p_{u,1}  \mathbb{I}\big(n_1(\boldsymbol{s}) \!+\! N n_2(\boldsymbol{s})\!+\! 1 \leq C\big), & \text{if } (a_1, a_2) \!=\! (1, 0), \\
      \!  g_2\eta_2 p_{u,2}\mathbb{I}\big(n_1(\boldsymbol{s}) \!+\!N n_2(\boldsymbol{s})\!+\!N \leq C\big), & \text{if } (a_1, a_2)\!=\! (0, 1), \\
      \!  1 - P_{\mathcal{A}}(1, 0 \mid \boldsymbol{s}) - P_{\mathcal{A}}(0, 1 \mid \boldsymbol{s}), & \text{if } (a_1, a_2) \!=\! (0, 0).
    \end{cases}
    \label{eq:admission_prob}
\end{align*}

Next, we solve for the steady-state probability vector $\mathcal{S}$. To achieve this, we traverse the state transition space via Breadth-First Search (BFS) without exhaustive enumeration, and subsequently solve the global balance equation $\mathcal{S} (\mathbf{P} - \mathbf{I}) = \mathbf{0}$ subject to the normalization constraint $\mathcal{S} \mathbf{1} = 1$. However, it is evident from \eqref{eq:D:num} that the size of the state space increases combinatorially with both the total amount of computational resources and the computation delay of each task. As a result, exact numerical evaluations that require constructing the full steady-state transition matrix and employing sparse matrix solvers remain computationally tractable only at limited scales, such as $C = 20$ and $D_{C,i} = 10$.

Alternatively, by approximating the deterministic service completions as a Bernoulli arrival process, the resource release process leverages the memoryless property. This allows us to bypass tracking execution-time states, successfully reducing the DTMC from $(D_{C,1}+D_{C,2})$ dimensions to a strictly $2$-dimensional model, which is presented in the next subsection.

\subsection{Computation Resource Dynamics in Random Service}

When the computation service time follows a geometric distribution, the resource pool can be modeled as a multi-rate Geo/Geo/C/C queueing. We can model this process as a two-dimensional DTMC. The state is defined as $\boldsymbol{s}=(n_1, n_2)$, representing the number of active Task 1 and Task 2 instances, respectively. Let $\boldsymbol{\Omega}_{\text{Geo}}$ be the state space of the Markov chain, given by
$\boldsymbol{\Omega}_{\text{Geo}}=\{(n_{1},n_{2}) \mid n_{1}, n_{2}\in \mathbb{N}, n_{1}+Nn_{2}\leq C\}.$ 
The total number of valid states can be derived as
\begin{align}
|\boldsymbol{\Omega}_{\text{Geo}}|\!&=\!\tfrac{1}{2}\left(C\!+\!1\!+\!\left(C\!+\!1\!-\!N\!\times\!\left\lfloor\tfrac{C}{N} \right\rfloor\right)\right)\!\!\left(\left\lfloor\tfrac{C}{N}\right\rfloor\!+\!1\right)\!\approx\! \tfrac{C^2}{2N}.
\end{align}

The state transition from current state $(n_1, n_2)$ to the next state $(\nu, \varphi)$ depends on two independent sub-processes in a single time slot: ($i$) the departure of completed tasks, and ($ii$) the admission of newly arrived tasks based on available resources. First, we define the probability mass function of the number of remaining tasks. Since each task independently completes with probability $\mu_i=1/D_{C,i}$, the number of remaining Task $i$ instances, denoted by $\kappa$, follows a Binomial distribution $B(n_i, 1-\mu_i)$. We define departure function as:
\begin{equation}
\begin{small}
B(n_i, \mu_i \mid \kappa) = \begin{cases} \binom{n_i}{\kappa} (1-\mu_i)^{\kappa} \mu_i^{n_i-\kappa}, & 0 \leq \kappa \leq n_i, \\ 0, & \text{otherwise}. \end{cases}
\end{small}
\end{equation}
\par
Second, we calculate the task admission probabilities. A newly arrived task is admitted only if the system has sufficient remaining capacity. Let $a_1 \in \{0, 1\}$ and $a_2 \in \{0, 1\}$ denote the number of admitted Task 1 and Task 2 instances, respectively. Given the current state $(n_1, n_2)$, the joint admission probability $P_{\mathcal{A}}(a_1, a_2 \!\mid\! n_1, n_2)$ is defined as:
\begin{equation*}
\begin{split}
\begin{cases}
\!P_{\mathcal{A}}(1,0 \mid n_1,n_2) \!=\! g_1\eta_1 p_{u,1}  \mathbb{I}(n_1 + Nn_2 + 1 \leq C) \\ 
\! P_{\mathcal{A}}(0,1 \mid n_1,n_2) \!=\! g_2\eta_2 p_{u,2}  \mathbb{I}(n_1 + Nn_2 + N \leq C) \\ 
\!P_{\mathcal{A}}(0,0 \mid n_1,n_2) \!=\! 1 \!-\! P_{\mathcal{A}}(1,0 \mid n_1,n_2) \!-\! P_{\mathcal{A}}(0,1 \mid n_1,n_2), \end{cases}
\end{split}
\end{equation*}
where $\mathbb{I}(\cdot)$ is the indicator function that returns $1$ if the condition is true, and $0$ otherwise. Finally, the target next state $(\nu, \varphi)$ is simply the sum of the remaining tasks and the newly admitted tasks. By applying the law of total probability over all possible admission events, the unified state transition probability $P_{(n_{1},n_{2})}^{(\nu,\varphi)}$ is concisely expressed as:
\begin{equation}\notag
 \sum_{(a_1, a_2) \in \mathcal{A}}\!\!\!\!P_{\mathcal{A}}(a_1, a_2 \!\mid\! n_1, n_2) B(n_1, \mu_1  \!\mid\! \nu \!-\! a_1) B(n_2, \mu_2  \!\mid\! \varphi \!-\! a_2). 
\end{equation}

The state transition probability matrix is denoted by $\boldsymbol{P}\triangleq\left[P_{(n_{1},n_{2})}^{(\nu,\varphi)}\right]\in \mathbb{R}^{{|\boldsymbol{\Omega}_{\text{Geo}|}\times |\boldsymbol{\Omega}_{\text{Geo}}|}}$. Leveraging the structural properties of the proposed DTMC's transition matrix, we can efficiently solve the system by employing a matrix-geometric approach. Detailed derivations are provided in Appendix \ref{proof:AppendixA}.

Further, we also present the steady state probability results for the continuous-time M/G/C/C model, which is often used as an approximation to the discrete time queueing. 

{\textbf{Approximation:} (Erlang's loss formula) The steady-state distribution of the number of tasks in the system for an M/G/C/C model is given by \cite{ErlangBook}:
\begin{equation}
\mathcal{S}_{n_{1},n_{2}}\!=\!\frac{\frac{\rho_1^{n_1}}{n_1!} \frac{\rho_2^{n_2}}{n_2!}}{\sum\limits_{(n_1,n_2)\in  \boldsymbol{\Omega}_\text{C}} \frac{\rho_1^{n_1}}{n_1!} \frac{\rho_2^{n_2}}{n_2!}},~\rho_i = \frac{g_i\eta_ip_{u,i}}{\mu_i},~i\in\{1,2\}.
\end{equation}
}

This result can be used as a tight approximation under light traffic, because the system dynamics are largely governed by transitions between adjacent states, with multi-server transitions for different tasks occurring with negligible probability. Consequently, Erlang's loss formula provides a tractable approximation for characterizing system behavior, albeit at the expense of analytical accuracy.

Based on the steady-state distribution analysis, the probability that task $i$ is successfully allocated a sufficient number of computation units can be obtained as:
\begin{equation}\label{eq:ProC}
\mathbb{P}(\Gamma\geq N_i) =\sum_{\boldsymbol{s} \in \boldsymbol{\Omega}}\mathcal{S}(\boldsymbol{s}) \cdot \mathbb{I}\Big(n_1(\boldsymbol{s}) + N n_2(\boldsymbol{s}) + N_i \leq C\Big),
\end{equation}
where $\Omega$ denotes the state space of the general queueing model.

Fig. 2 compares the blocking probabilities when task $i$ arrives at the controller, i.e., $1-\mathbb{P}(\Gamma\geq N_i)$, for the Geo/D/C/C, Geo/Geo/C/C, and M/G/C/C queues, all modeled under identical traffic loads. The exact Geo/D/C/C analysis closely matches the simulations when the state space is moderate in size. Notably, the Geo/D/C/C queue exhibits the lowest blocking probability over the illustrated range, as deterministic service times eliminate service-time variance. Consequently, the Geo/Geo/C/C and M/G/C/C models serve as valid analytical surrogates, with negligible approximation gaps in the light-traffic regime. The Geo/Geo/C/C queue provides a tight surrogate for evaluating the performance of Task 2, effectively bypassing the curse of dimensionality in the multi-rate Geo/D/C/C analysis.

\begin{figure}[t]
\captionsetup[subfigure]{font=footnotesize, labelfont=sf, textfont=sf}
    \centering
    \subfloat[Task 1]
{\includegraphics[height=3.6cm,width=4.65cm]{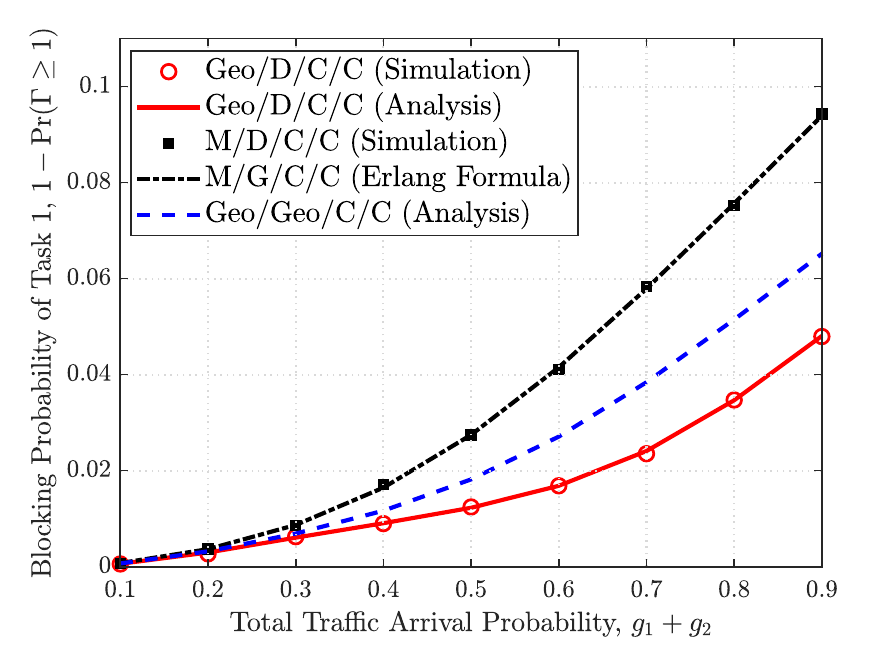}}
    \centering
    \subfloat[Task 2]
{\includegraphics[height=3.6cm,width=4.65cm]{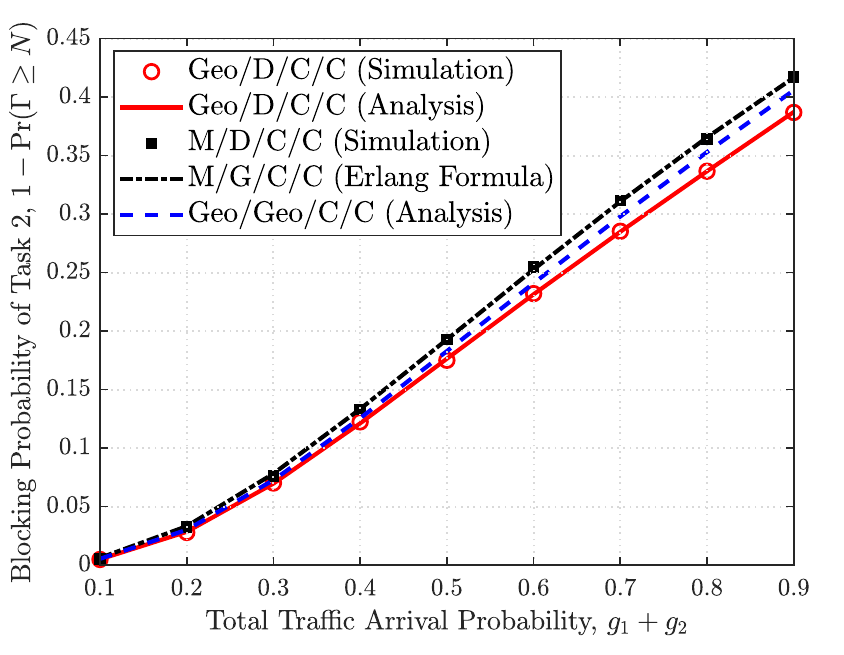}}
    \caption{Queueing model comparison. $C=12$, $N=4$, $D_{C,1}=5$, $D_{C,2}=10$,
    $\eta_1=\eta_2=1$,
    $p_{u,i}=1$, $g_1/g_2=4$. }
    \label{Fig:AoAvsC}
    \vspace{-0.5cm}
\end{figure}

\subsection{Optimization Problem Formulation}

In the considered WNCS, the ultimate goals are to optimize CoMA caused by packet drops and the timeliness of the regular task streams. Thus, we formulate the following bi-objective optimization problem:
 \begin{align}
  \mathscr{P}1:&\underset{P_{T,1}, P_{T,2},\eta_1,\eta_2}{\mathrm{minimize}}~~\{ \bar{\Delta}_1, \mathcal{C}_{\mathrm{MA}} \},\\
    &\mathrm{subject~to}~~g_1\eta_1 P_{T,1}+g_2 \eta_2 P_{T,2}\leq \bar{P},\label{eq:EConstraint}\\
    &~~~~~~~~~~~~~~\eta_1,\eta_2\in(0,1],~P_{T,1},P_{T,2}>0. \label{eq:ParaConstraint}
 \end{align}
The constraint \eqref{eq:EConstraint} ensures that the long-term average transmit power of the sensor does not exceed the prescribed power budget, and constraint \eqref{eq:ParaConstraint} specifies the physical constraints on the decision variables.

Without resorting to algorithmic simplifications of the optimization problem, we use a search method to verify whether applying a differentiated strategy to heterogeneous semantic data can provide performance improvements for the entire system, as discussed in the next section.

\section{Numerical Results and Discussion}
In this section, we validate the theoretical analysis through simulations and discuss the impact of parameters on the time-average AoAI of each task and the CoMA. Unless explicitly stated, we assume $C=8$, $N=4$,  $\omega_1=1$, $\omega_2=10$, $D_{C,i}=10~\text{slots}$, $D_{T,i}=0.1~\text{slots}$, $\sigma^2=-80~\mathrm{dBW}$, $\bar{\gamma}=5~\mathrm{dB}$, $m=1$, $\alpha_u\!=\!3$, $d_u\!=\!50$, $\bar{P}\!=\!0.18~W$, simulation time is $10^6$ slots.

Fig. \ref{fig:3} illustrates the performance metrics versus uplink access probability of task 1 $\eta_1$. Increasing $\eta_1$ degrades AoAI of task 2 and CoMA. The task-level performance analysis clearly reveals the resource contention among heterogeneous task flows. This confirms that task-specific metrics provide a more accurate characterization than aggregated AoI. Moreover, the stochastic-service approximation is accurate for Task 1 and for CoMA over the considered operating range, but its accuracy deteriorates for the AoAI of Task 2. This is because multi-server tasks are more sensitive to the temporal structure of resource releases, while AoAI depends on both the execution rate and the variability of successful actuations. The Erlang formula therefore serves as a tractable reference rather than a universally conservative approximation.

The goal in the considered WNCS is to minimize the CoMA caused by task command drops, while guaranteeing the execution timeliness of the regular task. The Pareto frontier (based on Geo/Geo/C/C approximation) in the $\{\bar{\Delta}_1, \mathcal{C}_{\mathrm{MA}}\}$ objective space, obtained by optimizing over $(P_{T,1},P_{T,2},\eta_1, \eta_2)$ is presented in Fig. 4. The results reveal a significant trade-off between $\bar{\Delta}_1$ and $\mathcal{C}_{\mathrm{MA}}$ when we employ differentiated resource allocation. Shifting toward the bottom-right of the frontier illustrates how the system trades off the AoAI of regular tasks for improved critical-task reliability. Furthermore, the traditional baseline can only adjust the effective offered loads of the two tasks simultaneously. An overly aggressive update policy intensifies contention for computational resources and degrades the execution reliability, whereas an overly conservative policy causes both tasks to suffer from insufficient update opportunities. Consequently, an optimal operating point may exist to minimize CoMA. The gap between Pareto frontier and baseline presents that differentiated allocation expands the achievable trade-off region and can improve actuation reliability.

\begin{figure}[t]
\captionsetup[subfigure]{font=footnotesize, labelfont=sf, textfont=sf}
    \centering
     \subfloat[AoAI]
{\includegraphics[height=3.6cm,width=4.8cm]{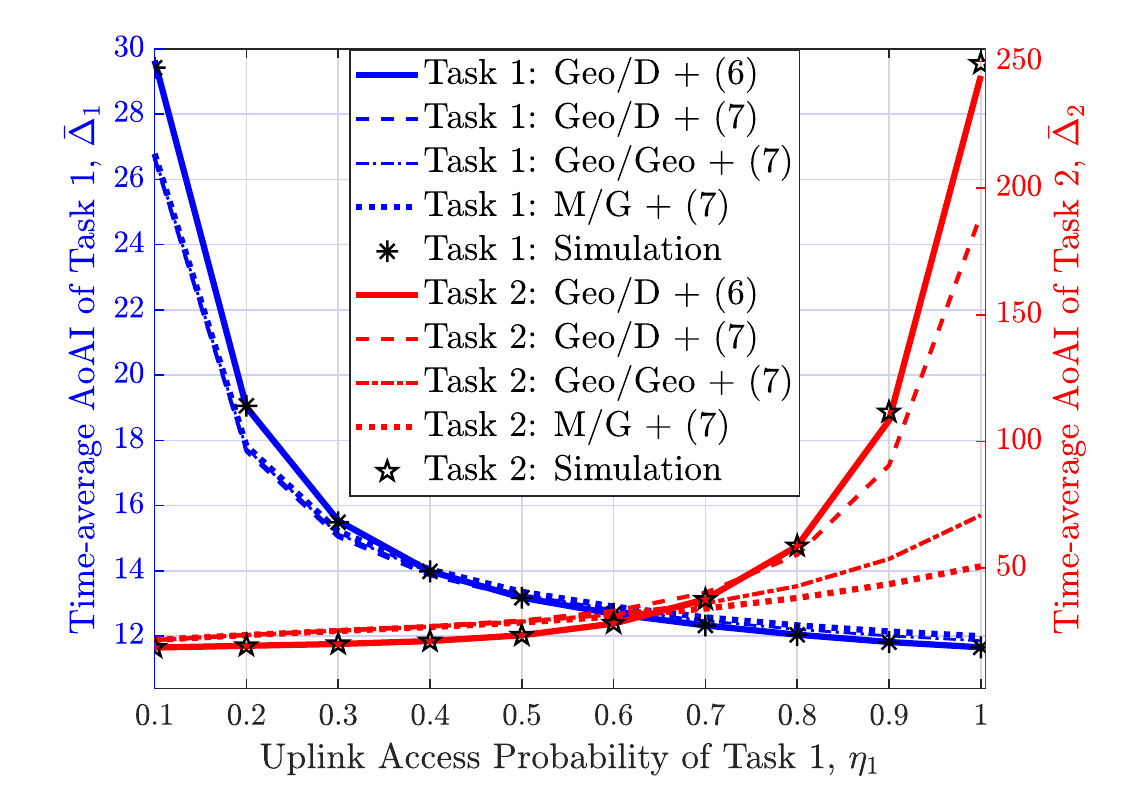}}
\centering
    \subfloat[CoMA]
{\includegraphics[height=3.6cm,width=4.5cm]{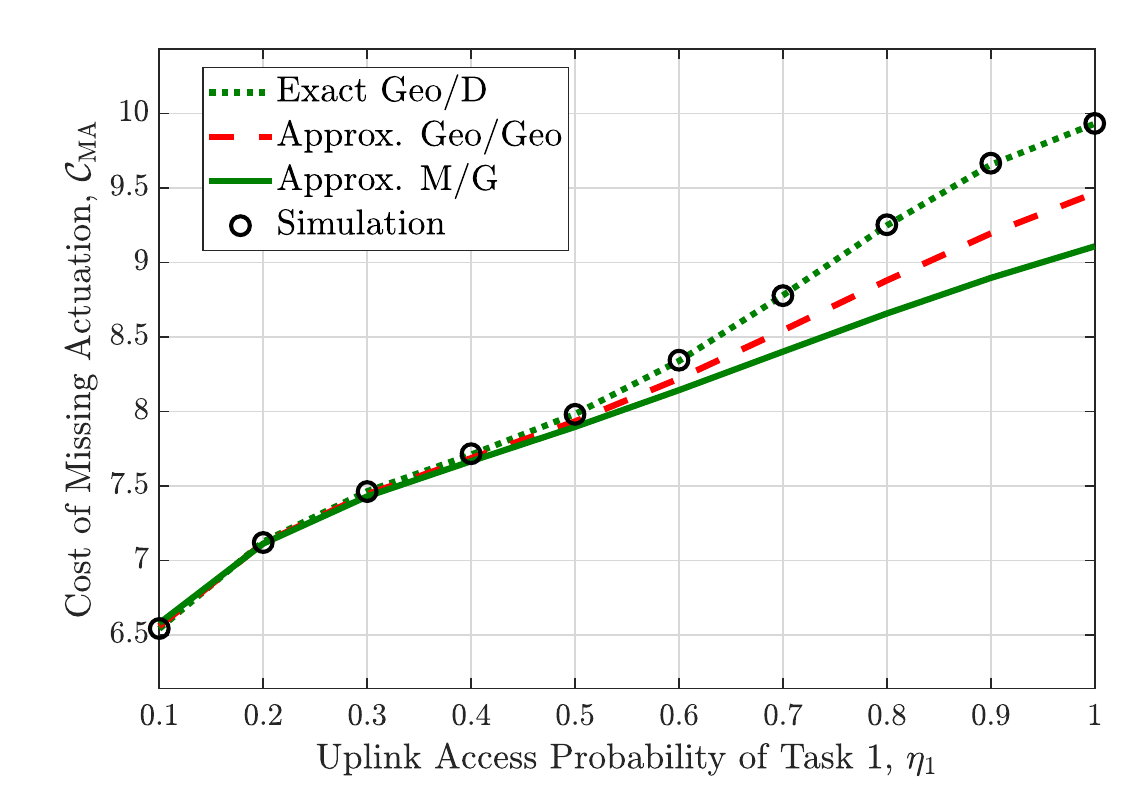}}
    \caption{Performance metrics versus uplink access probability of task 1. $P_{T,1}=0.05~W$, $P_{T,2}=0.2~W$, $\eta_2=0.8$, $g_1=0.8$, $g_2=0.2$.}
    \label{fig:3}
    \vspace{-0.5cm}
\end{figure}

\begin{figure}
    \centering
    \includegraphics[width=0.65\linewidth]{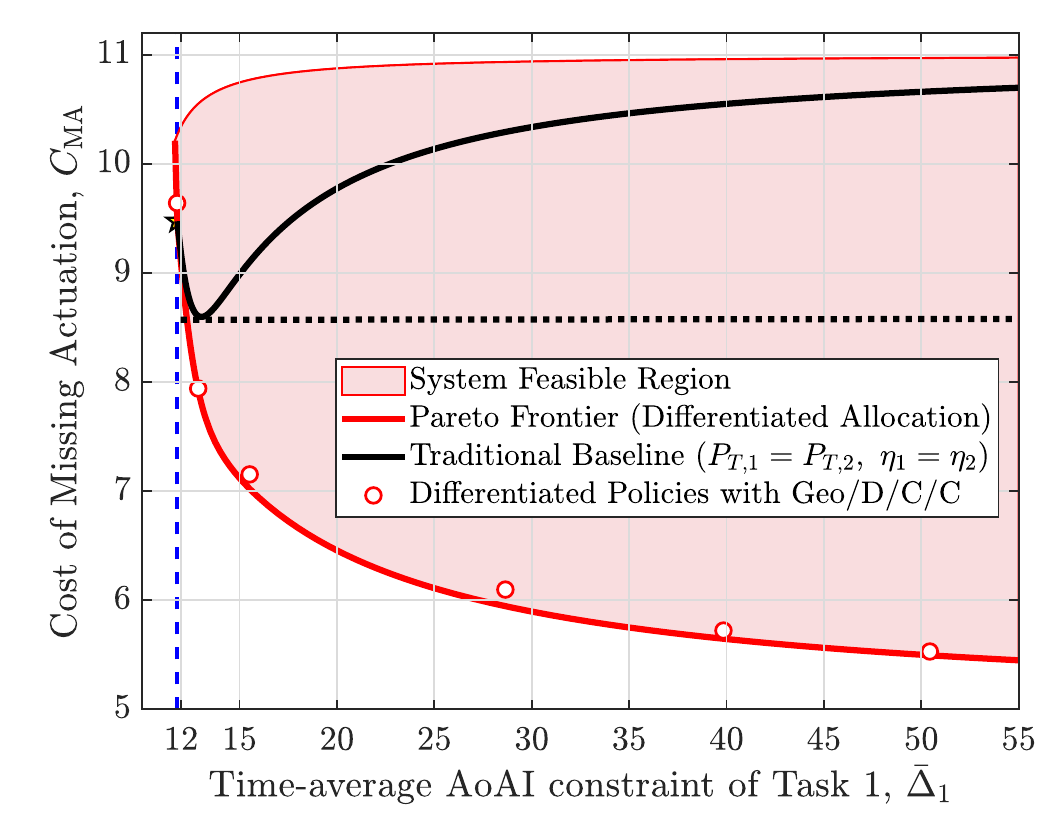}
    \caption{Joint optimization of $\bar{\Delta}_1$ and $\mathcal{C}_{\mathrm{MA}}$, $g_1=0.8$, $g_2=0.2$.}
    \label{fig:4}
    \vspace{-0.5cm}
\end{figure}

\appendices
\section{}\label{proof:AppendixA}
Let $n_1$ and $n_2$ denote the number of active Task 1 and Task 2 instances. Under the constraint $n_1+Nn_2\le C$, we partition the state space into levels based on $n_1\in\{0, 1, \dots, C\}$. Let $\boldsymbol{S}_{n_1}$ be the steady-state probability vector for Level $n_1$, forming the global vector $\mathcal{S}=[\boldsymbol{S}_0, \boldsymbol{S}_1, \dots, \boldsymbol{S}_C]$. Since a maximum of one task 1 can arrive per time slot, transitions to higher levels $k>n_1$ are strictly bounded by $k\le n_1+1$. This restricts the global balance matrix $\boldsymbol{Q}=\boldsymbol{P}-\boldsymbol{I}$ to a block Lower Hessenberg form. The global balance equation $\boldsymbol{S}\boldsymbol{Q}=\mathbf{0}$ yields the level-wise equations for $k\in\{1, 2, \dots, C\}$:
\vspace{-0.2cm}
\begin{equation}\label{eq:20}
\boldsymbol{S}_{k-1} Q_{k-1, k} + \boldsymbol{S}_k Q_{k,k} + \sum_{n_1=k+1}^{C} \boldsymbol{S}_{n_1} Q_{n_1,k} = \mathbf{0}. 
\end{equation}
Using the matrix-geometric method, we assume there exist rate matrices $R_k$ such that $\boldsymbol{S}_{n_1} \!=\! \boldsymbol{S}_k \prod_{m=k+1}^{n_1} R_m,~~ \forall n_1\!\ge\!k+1.$ 
Substituting this relationship back into the level-wise equation and factoring out $\boldsymbol{S}_k$ defines the matrix $\tilde{Q}_k$:
\begin{equation}
\tilde{Q}_k = Q_{k,k} + \sum_{n_1=k+1}^{C} \left( \prod_{m=k+1}^{n_1} R_m \right) Q_{n_1,k}. 
\end{equation}
This matrix $\tilde{Q}_k$ models a censored Markov chain, effectively folding all excursion paths from higher levels directly into the local transition of level $k$. Eq. \eqref{eq:20} then simplifies to $\boldsymbol{S}_{k-1} Q_{k-1, k} + \boldsymbol{S}_k \tilde{Q}_k = \mathbf{0}$, which defines the rate matrix $R_k = -Q_{k-1, k} \tilde{Q}_k^{-1}$, with $\boldsymbol{S}_k = \boldsymbol{S}_{k-1} R_k$. This establishes a backward recursion initialized at $k=C$ (where $\tilde{Q}_C = Q_{C,C}$) and proceeds downward to $k=1$. Finally, the base vector $\boldsymbol{S}_0$ is obtained by applying the geometric substitution:
\begin{equation}
 \boldsymbol{S}_0 \left[ Q_{0,0} + \sum_{n_1=1}^{C} \left( \prod_{m=1}^{n_1} R_m \right) Q_{n_1,0} \right] = \mathbf{0},    
\end{equation}
alongside the normalization condition $\sum_{n_1=0}^{C} \boldsymbol{S}_{n_1} \mathbf{1} = 1$.
\bibliography{ref}
\end{document}